\definecolor{shadecolor}{rgb}{0.9,0.9,0.9}
\definecolor{mygreen}{rgb}{0,0.5,0}
\definecolor{mygrey}{rgb}{0.5,0.5,0.5}
\newcommand{\arne}[1]{\textit{\textcolor{blue}{[arne]: #1}}} 
\newcommand{\N}{{\mathbb N}}
\newcommand{\tw}{\text{tree-width}}
\newcommand{\el}{\varepsilon}
\newcommand{\El}{\mathcal{E}}
\newcommand{\out}{\emph{out}}
\newcommand{\inc}{\emph{in}}
\newcommand{\RLF}{\textsf{RLF}}
\newcommand{\SLF}{\textsf{SLF}}
\title{Transiently Consistent SDN Updates: Being Greedy is Hard}
\titlerunning{Greedy is Hard} 
\author[1]{Saeed Akhoondian Amiri}
\author[1]{Arne Ludwig}
\author[2]{Jan Marcinkowski}
\author[3,1]{Stefan Schmid}
\affil[1]{Technical University Berlin, 
  Berlin, Germany, 
  \texttt{saeed.amiri@tu-berlin.de, arne@inet.tu-berlin.de}}
\affil[2]{Institute of Computer Science,
University of Wroclaw, Wroclaw, Poland,
  \texttt{jasiekmarc@gmail.com}}
  \affil[3]{Aalborg University.
  Aalborg, Denmark,
  \texttt{schmiste@cs.aau.dk}}
\authorrunning{
S.~Amiri,
A.~Ludwig,
J.~Marcinkowski,
S.~Schmid
}
\begin{document}

\maketitle

\begin{abstract}
The software-defined networking paradigm introduces
interesting opportunities to operate
networks in a more flexible, optimized, yet formally
verifiable manner. Despite the logically
centralized control, however, a Software-Defined Network~(SDN)
is still a distributed system,
with inherent delays between the switches and the controller.
Especially the problem of changing network
configurations in a consistent manner,
also known as the consistent network update
problem, has received much attention 
over the last years. In particular, it has been shown
that there exists an inherent tradeoff between update
consistency and speed.
This paper revisits the problem
of updating an SDN in a transiently consistent,
loop-free
manner. First, we rigorously prove that computing a maximum
(``greedy'') loop-free network update is generally NP-hard;
this result has implications for the classic
maximum acyclic subgraph problem
(the dual feedback arc set problem) as well.
Second, we show that for
special problem instances, fast and good approximation
algorithms exist. 
\end{abstract}

\section{Introduction}\label{sec:intro}

By outsourcing and consolidating
the control over multiple data-plane elements
to a centralized software program,
Software-Defined Networks~(SDNs)
introduce flexibilities and optimization
opportunities. However, 
while a logically centralized
control is appealing, an SDN
still needs to be regarded as a distributed system,
posing non-trivial challenges~\cite{infocom15,correct,hotnets14update,roger,
sharon,abstractions,jukka}.
In particular, the communication channel between switches
and controller exhibits non-negligible and varying
delays~\cite{dionysus,abstractions},
which may introduce inconsistencies
during \emph{network updates}. 

Over the last years, the problem of how to
consistently update routes in a~(software-defined) network 
has received much attention, both in
the systems as well as in the 
theory community~\cite{correct,podc15,roger,abstractions,
brighten-nsdi}.
While in the seminal work by Reitblatt et
al.~\cite{abstractions}, 
protocols providing strong, per-packet consistency guarantees 
were presented~(using some kind of
2-phase commit approach),
it was later observed that weaker, but transiently consistent
guarantees can be implemented more efficiently.
In particular, Mahajan and Wattenhofer~\cite{roger} 
proposed a first algorithm to update routes in a network
in a transiently loop-free manner.
Their approach is appealing as it does
not require packet tagging~(which comes with overheads
in terms of header space and also introduces challenges
in the presence of middleboxes~\cite{simple} or multiple controllers~\cite{infocom15})
  or additional TCAM entries~\cite{infocom15,abstractions}
 ~(which is problematic given the fast table growth both in the Internet
  as well as in the highly virtualized datacenter~\cite{fib-growth}).
Moreover, this approach also 
allows~(parts of the) paths to become available sooner~\cite{roger}.

Concretely, to update a network in a transiently loop-free
manner, the approach proceeds \emph{in rounds}~\cite{podc15,roger}:
in each round, 
a ``safe subset'' of~(so-called OpenFlow) switches is updated,
such that, independently of the times and order in 
which the updates of this round take effect, 
the network is always consistent. 
The scheme can be implemented as follows:
After the switches of round~$t$
have confirmed the successful update~(e.g., using acknowledgments~\cite{update-ack}),
the next subset of switches for round~$t+1$ is
scheduled. 

It is easy to see that a simple update schedule always exists:
we can update switches one-by-one, proceeding from
the destination toward the source of a route.  
In practice, however, it is desirable that updates are fast
and new routes
become available quickly: 
Ideally, in order to be able to use as many new links
as possible, one aims to maximize the number of concurrently
updated switches~\cite{roger}. 
We will refer to this approach as the \emph{greedy approach}.

This paper revisits the problem of updating a maximum
number of switches in a transiently loop-free manner.
In particular, we consider the two different notions of loop-freedom
introduced in~\cite{podc15}: \emph{strong loop-freedom} and \emph{relaxed
loop-freedom}. The first variant guarantees loop-freedom
in a very strict, topological sense: no single packet will ever 
loop. The second variant 
is less strict, and allows for a small constant number of packets to loop
during the update; however, at no point in time should newly arriving packets 
be pushed into a loop. It is known that by relaxing loop-freedom, in principle 
many more switches can be updated simultaneously.

\noindent \textbf{Our Contributions.} 
We rigorously prove that computing the maximum set of 
switches which can be updated simultaneously, without introducing
a loop,
is NP-hard, both regarding strong and relaxed
loop-freedom. This result may be somewhat suprising, 
given the very simple
 graph induced by our network update problem.
The result also has implications for the classic
Maximum Acyclic Subgraph Problem~(MASP), a.k.a.~the 
dual Feedback Arc Set Problem~(dFASP): The problem of computing
a maximum set of switches which can be updated
simultaneously, corresponds to the dFASP, 
on special graphs essentially describing
two routes~(the old and the new one). 
Our NP-hardness result shows that MASP/dFASP
are hard even on such graphs.
On the positive side, we identify 
network update problems which allow for optimal or almost 
optimal~(with a provable approximation factor less than 2) polynomial-time algorithms,
e.g., problem instances where the number of leaves is bounded
or problem instances 
with bounded underlying undirected tree-width.

\section{Model}\label{sec:model}

We are given a network and two policies resp.~\emph{routes}
$\pi_1$~(the \emph{old policy}) and~$\pi_2$~(the \emph{new policy}).
Both~$\pi_1$ and~$\pi_2$ are simple directed paths (\emph{digraphs}).
Initially,
packets are forwarded~(using the \emph{old rules}, henceforth also called \emph{old edges}) along~$\pi_1$, and 
eventually they should be forwarded according to the new rules of~$\pi_2$. Packets
should never be delayed or dropped at a switch, henceforth 
also called \emph{node}: whenever a packet arrives at a node, a matching forwarding
rule should be present.
Without loss
of generality, we assume that~$\pi_1$ and~$\pi_2$ lead from a source~$s$ to a destination~$d$.

We assume that the network is managed
by a controller which sends 
out forwarding rule updates 
to the nodes.
As the individual node updates occur in an asynchronous manner, 
we require the controller to send out simultaneous updates only
to a ``safe'' subset of nodes. Only after these updates have been confirmed
(\emph{ack}ed), the next subset is updated.

We observe that nodes appearing only in one or none of the two paths are
trivially updatable, therefore we 
focus on the network~$G$ induced by the nodes~$V$ which are part of \emph{both}
policies~$\pi_1$ \emph{and}~$\pi_2$, i.e.,~$V = \{v: v \in \pi_1 \wedge v \in \pi_2\}$.
We can represent the policies as
$\pi_1=(s=v_1,v_2,\ldots,v_{\ell}=d)$ and~$\pi_2=(s=v_1,\pi(v_2),\ldots,\pi(v_{\ell-1}),v_\ell=d)$,
for some permutation~$\pi: V\setminus\{s,d\} \rightarrow V\setminus\{s,d\}$ and some number~$\ell$.
In fact, we can represent policies in an even more compact way:
we are actually only concerned about the
nodes~$U\subseteq V$ which need to be updated.
Let, for each node~$v\in V$,~$\out_1(v)$~(resp.~$\inc_1(v)$) denote the outgoing~(resp.~incoming) edge according to
policy~$\pi_1$,
and~$\out_2(v)$~(resp.~$\inc_2(v)$) denote the outgoing~(resp.~incoming) edge according to policy~$\pi_2$.
Moreover, let us extend these definitions for entire node sets~$S$, i.e.,~$\out_i(S)=\bigcup_{v\in S} \out_i(v)$, for
$i\in\{1,2\}$, and analogously, for~$\inc_i$.
We
define~$s$ to be the first node~(say, on~$\pi_1$) with~$\out_1(v) \neq \out_2(v)$, and~$d$
to be the last node with~$\inc_1(v) \neq \inc_2(v)$.
We are interested in the set of to-be-updated nodes~$U= \{v\in V: \out_1(v) \neq \out_2(v)\}$,
and define~$n=|U|$. Given this reduction, in the following,
we will assume that~$V$ only consists of interesting nodes~($U = V$).

We require that paths be loop-free~\cite{roger},
 and distinguish between \emph{Strong
Loop-Freedom}~($\SLF$)
and \emph{Relaxed Loop-Freedom}~($\RLF$)~\cite{podc15}.

\noindent \textbf{Strong Loop-Freedom.}
We want to find an \emph{update schedule}~$U_1, U_2, \ldots, U_k$, i.e., a sequence of subsets~$U_t\subseteq U$
where the subsets form a partition of~$U$~(i.e.,~$U=U_1 \cupdot U_2 \cupdot \ldots \cupdot U_k$),
with the property that for any round~$t$, given that the updates~$U_{t'}$ for~$t'<t$ have been made,
all updates~$U_t$ can be performed ``asynchronously'', that is, in an arbitrary order without violating loop-freedom.
Thus, consistent paths will be maintained for any subset of updated nodes,
independently of how long individual updates may take.

More formally, let~$U_{<t}=\bigcup_{i=1,\ldots,t-1} U_i$ denote the set of nodes which
have already been updated before round~$t$, and let~$U_{\leq t}$,~$U_{>t}$ etc.~be defined analogously.
Since updates during round~$t$ occur asynchronously, an arbitrary subset of nodes~$X \subseteq U_t$
may already have been updated while the nodes~$\overline{X}=U_t\setminus X$ still use the old rules,
resulting in a temporary forwarding graph~$G_t(U,X,E_t)$
over nodes~$U$, where~$E_t= \out_1(U_{>t}\cup \overline{X}) \cup \out_2(U_{<t}\cup X)$.
We require that the update schedule~$U_1, U_2, \ldots, U_k$
fulfills the property that for all~$t$ and for any~$X\subseteq U_t$,~$G_t(U,X,E_t)$
is loop-free.


In the following we will call an edge~$(u,v)$ of the new policy
$\pi_2$ \emph{forward}, if~$v$ is closer~(with respect to~$\pi_1$) to the destination,
resp.~\emph{backward}, if~$u$ is closer to the destination.
It is also
convenient to name nodes after their outgoing edges w.r.t. policy $\pi_2$~(e.g., \emph{forward} or \emph{backward});
similarly, it is sometimes convenient to say that we \emph{update an
  edge} when we update the corresponding node.

While the initial network configuration
consists of two paths, in later rounds,
the already updated solid edges may no longer form a line from left to right,
but rather an arbitrary directed tree, with tree edges directed towards the destination~$d$.
We will use the terms \emph{forward} and \emph{backward}
also in the context of the tree: they are defined with respect to the direction of the tree root.
However, there also emerges a third
kind of edges:
\emph{horizontal edges} in-between two different branches of the
tree.

%

\noindent \textbf{Relaxed Loop-Freedom.}
\emph{Relaxed Loop-Freedom}~($\RLF$) is motivated by the practical
observation that transient loops are not very harmful if they
do not occur between the source~$s$ and the destination~$d$.
If relaxed loop-freedom is preserved, only a constant number of packets
can loop: we will never push new packets into a loop ``at line rate''.
In other words, even if switches acknowledge new updates late~(or never), new packets
will not enter loops. Concretely, and similar to the definition of~$\SLF$, we require the update schedule
 to fulfill the property that for all rounds~$t$ and for any subset~$X$,
 the temporary forwarding graph~$G_t(U,X,E'_t)$ is loop-free.
The difference is that we only care about the subset~$E'_t$ of~$E_t$ consisting of edges \emph{reachable
from the source}~$s$.

\noindent \textbf{The Greedy Approach.}
Our objective is to update simultaneously
as many nodes~(or equivalently, edges) as possible: a
greedy approach~\cite{roger}. 
Note that in the first round, computing a maximum
update set is trivial: All forward edges can be updated
simultaneously, as they will never introduce a cycle;
at the same time, no backward edge can be updated
in the first round, as it can always induce a cycle.
Also observe that since all nodes lie on the path from
source to destination, this holds for both strong
and relaxed loop-freedom. However, as we will show in this paper,
already in the second round, a computationally
hard problem can arise.



\section{Being Greedy is Hard}\label{sec:hardness}

Interestingly, although the underlying graphs are very simple, 
and originate from just two (legal) paths, we 
now prove that the loop-free network update problem is NP-hard.

\begin{theorem}
\label{thm:greedyhard}
The greedy network update problem is NP-hard.
\end{theorem}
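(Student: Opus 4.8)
The plan is to give a polynomial-time reduction from a known NP-hard problem to the decision version of the greedy update problem: given the two paths $\pi_1,\pi_2$ and an integer $k$, can $k$ nodes be updated in a single round? Since the excerpt already observes that the first round is forced (every forward edge must be updated, no backward edge may be), I would place all of the difficulty in the \emph{second} round, and design $\pi_1,\pi_2$ so that after the forced first round the resulting configuration makes the maximum safe second-round update encode the optimum of the source instance.

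The first technical step I would carry out is to reduce the awkward ``for all $X\subseteq U_t$'' asynchronous safety condition to a single acyclicity test. Writing $U_1$ for the forward nodes updated in round $1$ and $B$ for the remaining backward nodes, I claim a candidate set $S\subseteq B$ is safely updatable in round $2$ if and only if the graph $\out_1(B)\cup\out_2(U_1)\cup\out_2(S)$ is acyclic. The reason is that a directed cycle is realisable by \emph{some} partial configuration $X$ exactly when, for each node it traverses, the edge it uses is available: nodes in $U_1$ only offer their new edge, nodes in $B\setminus S$ only their old forward edge, and nodes in $S$ offer both, so taking $X$ to be the cycle's new-edge nodes realises it. With this lemma the second-round problem becomes: keep all mandatory forward edges $\out_1(B)\cup\out_2(U_1)$ and add a maximum subset of the backward new edges $\out_2(B)$ without closing a cycle. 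This is precisely Maximum Acyclic Subgraph on essentially the union $\pi_1\cup\pi_2$, which is exactly why the hardness should transfer to MASP/dFASP on unions of two paths.

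With the problem in this clean form, I would encode an NP-hard problem (my first choice is $3$-SAT) into the backward-edge/forward-edge structure. The mandatory forward edges form a fixed skeleton and each backward edge is an optional choice; a backward edge that already closes a cycle with forward edges alone can never be taken, while two or more backward edges conflict when their new edges together with forward paths form a cycle. I would build variable gadgets whose backward edges come in mutually exclusive pairs (realising a truth assignment) and clause gadgets whose extra updatable edge can be taken iff at least one literal of the clause is set to the satisfying value, then choose $k$ so that a safe update of size $\ge k$ exists exactly when the formula is satisfiable.

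The hard part will be realisability: the target class ``union of two simple paths'' is extremely rigid, since $\out_1$ must trace one Hamiltonian path and $\out_2$ another, so every gadget — variables, clauses, and the wiring that carries conflicts between them — has to be laid out inside just two global paths while inducing exactly the intended cycles and no spurious ones. Arranging the forced first round to leave precisely the tree-plus-horizontal-edge configuration that makes the counting exact, and verifying that no unintended cycle through a crossing horizontal edge creates extra conflicts, is where most of the care goes. Finally, I would extend the argument to relaxed loop-freedom: under $\RLF$ only edges reachable from the source $s$ are tested for cycles, so I must additionally ensure that every cycle enforcing a constraint lies on a path from $s$ (or adapt the gadgets accordingly), thereby establishing NP-hardness for both $\SLF$ and $\RLF$.
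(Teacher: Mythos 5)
Your high-level skeleton matches the paper's: force round one (all forward edges, no backward edges), place the hardness in round two, and encode an NP-hard problem via ``cycle'' constraints among the optional new edges. Your reduction of the asynchronous safety condition to a single acyclicity test is also correct: a simple cycle in the union graph $\out_1(B)\cup\out_2(U_1)\cup\out_2(S)$ uses at most one outgoing edge per node, so choosing $X$ to be the cycle's new-edge nodes realises it, and conversely acyclicity of the union dominates every partial configuration. (The paper reduces from Minimum Hitting Set rather than $3$-SAT, which is a better-matched source problem: cycle constraints can only express ``not all of these edges may be updated,'' i.e.\ covering constraints, which is exactly the hitting-set structure; SAT additionally needs variable-consistency constraints and a threshold argument, which is workable but strictly more delicate.)

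The genuine gap is that your proposal stops exactly where the proof begins. You correctly name realisability --- embedding the gadgets into the union of two simple paths so that the intended configuration is reachable after one greedy round and induces exactly the intended cycles and no spurious ones --- as ``where most of the care goes,'' but you give no construction for it, and this is essentially the entire technical content of the paper's proof: delayer gadgets to create forward edges that only appear in round two, branching nodes that turn the single old path into a tree of independent branches, and a zigzag gadget to realise groups of parallel inter-branch edges (parallel edges do not exist natively, since each node carries one outgoing edge per policy). Moreover, your threshold argument silently assumes that an optimal update never sacrifices clause/wiring edges to gain more literal edges; the paper has to engineer this explicitly via multiplicities (each set is represented by $m+1$ parallel edges against only $m$ anti-selector edges, and $m\cdot(m+1)$ relaxed edges force the $\RLF$ case), and without an analogous forcing device your correspondence ``size $\ge k$ iff satisfiable'' is unsubstantiated. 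The same applies to your $\RLF$ extension: saying you ``must ensure every constraining cycle lies on a path from $s$'' is precisely the problem the paper's source-branch and relaxed-edge gadgets solve, not a step that follows from the rest of your argument.
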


Our reduction is from the NP-hard 
\emph{Minimum Hitting Set}
problem. 
This proof is similar for both consistency models: strong
and relaxed loop-freedom, and we can present the
two variants together.
The inputs to the hitting set problem are:
\begin{enumerate}
  \item A universe of~$m$ elements~$\El=\{\el_1,\el_2,\ldots,\el_m\}$. 
\item A set 
$S=\{S_1, S_2, S_3, \ldots, S_k\}$ of~$k$ subsets~$S_i \subseteq \El$. 
\end{enumerate}

The objective is to find a subset~$\El'\subseteq \El$ of minimal size, 
such that each set~$S_i$ includes at least one element
from~$\El'$:~$\forall S_i \in S: S_i \cap \El' \neq \emptyset$. 
In the following, we will assume that elements are unique and 
can be ordered~$\el_1 < \el_2 \ldots < \el_m$.
The idea of the reduction is to create, in polynomial time, a legal network update
instance where the problem of choosing a maximum set of nodes which
can be updated concurrently is equivalent to 
choosing a minimum hitting set. 
While in the initial network configuration, essentially describing two paths
from~$s$ to~$d$, a maximum update set can be chosen in polynomial time
(simply update all forwarding edges but no backward edges),
we show in the following that already in the second round, 
the problem can be computationally hard. 

\begin{figure*} [ht]
    \begin{center}
      \includegraphics[width=0.79\textwidth]{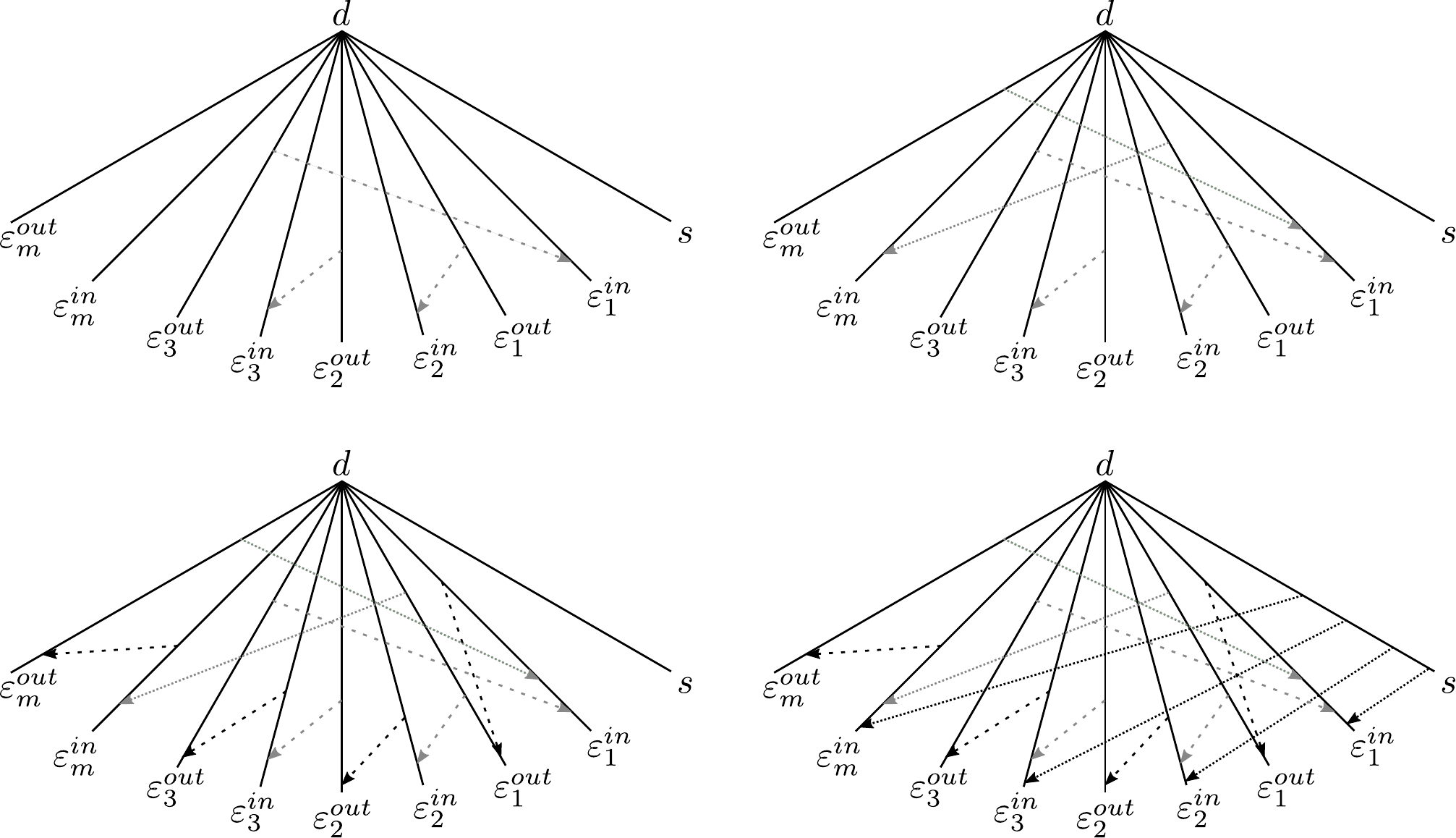}\\
      \caption{Example: Construction of network update instance
      given a hitting set instance with~$\El=\{1,2,3,\ldots,m\}$ and~$S=\{\{1,2,3\},\{1,m\}\}$. 
      Each element~$\el\in \El$ is 
represented by a pair of branches, one called outgoing~($out$) and 
one incoming~($in$). Moreover, we add a branch representing
the~$s-d$ path on the very right. The solid black branches represent already installed rules~(either old or updated in the first round), and new rules~(dashed) 
are situated between the branches.  
      There are three types of to-be-updated, dashed edges:
      one type represents the sets~(loosely and densely dashed grey), 
      one type represents element selector edges~(between in and out branch, loosely dashed black),
      and one type is required to connect the~$s-d$ path
      to the elements~(densely dashed grey).
      We prove that
      such a scenario can be reached after one
      update round where all~(and only)
      forward edges are updated. 
   \emph{Top-left:} Each loosely dashed grey edge represents~$m+1$ 
edges, and is used to describe the set~$\{1,2,3\}$:$(1,2),(2,3),(3,1)$. 
\emph{Top-right:} Each densely dashed grey edge represents~$m+1$ edges 
and is used for the set ~$\{1,m\}$:~$(1,m),(m,1)$. \emph{Bottom-left:} 
The loosely dashed black edges are single edges and are the element selector edges,
 representing the decision if
 an element is part of~$\El'$ or not. 
\emph{Bottom-right:} Each densely dashed edge 
visualizes~$m \cdot~(m+1)$ edges from the~$s$-branch to the incoming branches of every~$\el\in \El$.}
      \label{fig:overview-pic}
    \end{center}
  \end{figure*}

More concretely, based on a hitting set instance, we aim to construct a 
network update instance of the following form,
see Figure~\ref{fig:overview-pic}.
  For each element 
$\el \in \El$, we create a pair of branches~$\el^{in}$ and~$\el^{out}$,
i.e.,~$2m$ branches in total. 
To model the relaxed loop-free case, 
in addition to the~$\El$ branches, we add a source-destination branch, 
from~$s$ to~$d$, depicted on the right in the figure. 
We will introduce the following to-be-updated new edges:
\begin{enumerate}
\item \textbf{Set Edges~(SEs):} 
The first type of edges models sets.
Let us refer to the~(ordered)
elements in a given set~$S_i$ by 
$\el_1^{(i)}<\el_2^{(i)}<\el_3^{(i)} \ldots$. 
For each set~$S_i \in S$, 
we now 
create~$m+1$ edges from each~$\el_j^{(i)}$ to~$\el_{j+1}^{(i)}$,
in a modulo fashion. That is, we also introduce~$m+1$ edges from the last element 
to the first element of the set. These edges start at the~$out$ branch
of the smaller index and end at the  
$in$ branch of the larger index. There are no requirements
on how the edges of different sets are placed with respect to
each other, as long as they are not mixed. Moreover, only one instance of multiple equivalent SEs
arising in multiple sets must be kept. 

\item \textbf{Anti-selector Edges~(AEs):} 
These~$m$ edges constitute the decision problem of 
whether an element should be included 
in the minimum hitting set. 
AEs are created as follows: 
From the top of each~$in$ branch we create a \emph{single} edge
 to the 
bottom of the corresponding~$out$ branch. 
That is, we ensure that an 
update of the edge from~$\el_{i}^{in}$ to~$\el_{i}^{out}$ is equivalent to 
$\el_i \not\in \El'$, or, equivalently, 
every~$\el_i \in \El'$ will not be included in the update set. 

\item \textbf{Relaxed Edges~(WEs):} These edges are 
only needed for the relaxed loop-free case. They connect
the~$s$-$d$ branch to the other branches in such a way
that no loops are missed. In other words, the edges aim to emulate
a strong loop-free scenario by introducing artificial sources
at the bottom of each branch.
To achieve this, we create a certain number of
edges from the~$s$-branch to the bottom of every~$in$ branch. The precise amount will be explained at the detailed 
construction part of creating parallel edges.
See Figure~\ref{fig:overview-pic}~\emph{bottom-left} for an example. 
\end{enumerate}

The rational is as follows. 
If no \emph{Anti-selector Edges}~(AEs)
are updated, 
all \emph{Relaxed Edges}~(WEs)
as well as all \emph{Set Edges}~(SEs)
can be updated simultaneously, without
introducing a loop. 
However, since there are in total exactly~$m$ 
AEs but each set of SEs
are~$m+1$ edges~(hence they will all be updated), we can conclude that
the problem boils down to selecting a
maximum number of element AEs
which do not introduce a loop. 
The set of non-updated AEs
constitutes the selected sets, the hitting set:
There must be at least one
element for which there is an AE,
preventing the loop. By maximizing the number
of chosen AEs~(maximum update set)
we minimize the hitting set. 

Let us consider an example: 
In Figure~\ref{fig:overview-pic}~\emph{bottom-right}, 
if for a set 
$S_i$ every AE of~$\el_i \in S_i$ is updated, a cycle is created:
updating edges~$\el_{1}^{in}$ and~$\el_{m}^{in}$ 
results in a cycle with the~$m+1$ 
edges from~$\el_{1}^{out}$ and~$\el_{m}^{out}$.
Note that the resulting network update
instance is of polynomial size~(and can also
be derived in polynomial time). 
In the remainder of the proof,
we show that 
the described network update instance is indeed legal,
e.g., we have a single path from source to destination,
and this instance can actually be obtained after one update round.

\subsection{Concepts and Gadgets}\label{sec:Concepts}

Before we describe the details of the construction, 
we first make some fundamental observations
regarding greedy 
updates.


\textbf{Introducing Forwarding Edges and Branches:} 
First, a delayer concept is required to establish
forwarding edges for the second round. 
Observe that every forwarding edge~$(a,b)$, with~$a<b$,
is always updated by a greedy algorithm in the first round. 
A delayer is used to construct a forward edge~$(a,b)$, with~$a<b$, that 
is created in the second round.
A \emph{delayer} for edge~$(a,b)$ consists of two edges: an 
edge pointing backwards to~$a'$ from~$a$
with~$a' < a$, plus an edge pointing from there to~$b$. 
The forward edge~$(a',b)$ will be updated in the first round, 
which yields an edge~$(a,b)$ due to merging~(see
Figure~\ref{fig:delayer-pic}). 

\begin{figure} [ht]
    \begin{center}
      \includegraphics[width=0.45\columnwidth]{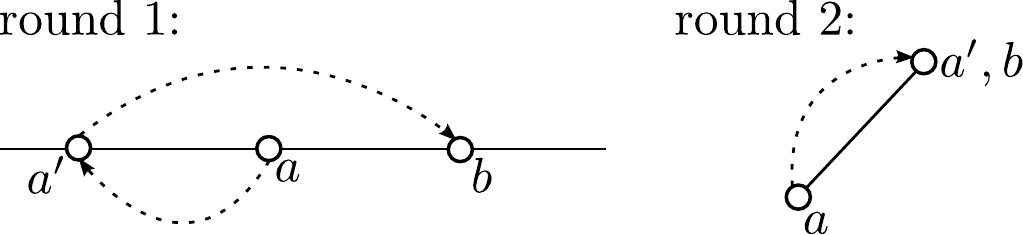}\\
      \caption{Delayer concept: A forwarding edge~$(a, a'b)$ 
      can be created in round 2 using a helper node~$a'$.}
      \label{fig:delayer-pic}
    \end{center}
  \end{figure}

We next describe how to 
create the~$in$ and~$out$ branches as well as the~$s$ branch pointing to the destination
$d$
(recall Figure~\ref{fig:overview-pic}). This can 
be achieved as follows: From a node close to the source~$s$, 
we create a path of forward edges which ends at the 
destination. Each of these forward edges will be updated in the first round, 
and hence merged with its respective 
successor, which will be the destination for the very last forward edge. The nodes belonging to these forward edges 
will be called \emph{branching nodes}. 
Every node in-between two \emph{branching nodes} will be part of a new branch pointing to the 
destination. See Figure~\ref{fig:branching-pic} for an example. The rightmost node before the \emph{branching node} on 
the 
line will also be the topmost node on the branch after the first round update~(as long as it has an outgoing backward 
edge, hence not being updated in the first round). 
We will use the terms right and high 
(rightmost-topmost) and left-low for the first and second round interchangeably.

\begin{figure} [ht]
    \begin{center}
      \includegraphics[width=0.69\columnwidth]{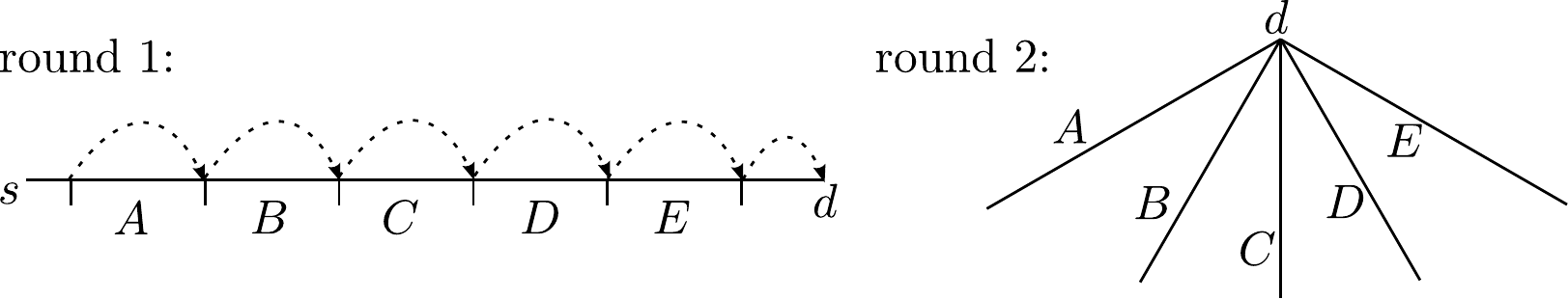}\\
      \caption{Creating branches after a greedy update of forward edges.}
      \label{fig:branching-pic}
    \end{center}
  \end{figure}

\textbf{Introducing Special Segments:}
In our construction, we split the line~(old path) into disjoint 
segments which will become independent branches at the 
beginning of the second round. 
In addition to these segments, there will be two special segments,
one at the beginning and one at the end. 
The first will not even become an independent branch 
at the beginning of the second round, but is merely
used to realize the delayer edges.
Behind the very last segment~($\el_{1}^{in}$) and just before~$d$,
there is a second special segment, which
we call \emph{relaxed}: it is needed to create the branch 
with the source~$s$ at the bottom and its connections to the 
other~$\el_{i}^{in}$ branches. 

In our construction, SEs come in groups 
of~$m+1$ edges. These edges must eventually
be part of a legal network update path, and
must be connected in a loop-free manner.
In other words, 
to create the desired problem instance, 
we need to find a way to connect two branches~$b_1$ and~$b_2$ with~$m+1$ edges, 
such that there is a single complete path from~$s$ to~$d$. 
Furthermore, these edges should not form a loop. 

\textbf{Creating Parallel Edges:}
Parallel
edges can be constructed as follows, henceforth called the \emph{zigzag-approach}
(Figure~\ref{fig:nConnection-pic}): Split the branch~$b_2$ into two different parts. The first part~$b_{2-b}$ on the 
left side~(respectively bottom of the branch)
will be used to complete the path but can only be reached over backward edges. The second part~$b_{2-t}$ will receive 
the 
incoming edges from the other branch,~$b_1$. Start at a node, say~$v_{o-1}$ on~$b_1$. Here create an edge 
to a 
node of~$b_{2-t}$, say~$v_{i-1}$ and from there a backward edge to a node of~$b_{2-b}$, say~$v'_{i-1}$. 
Afterwards use a delayed edge to connect to~$v_{o-1}$'s right~(respective to the line) neighbor, 
$v_{o-2}$. From here create the next edge to~$v_{i-1}$'s right neighbor,~$v_{i-2}$ and the backward edge to~$v'_{i-2}$ 
on~$b_{2-b}$ again. Repeat this procedure~$m+1$ times. 
  
This zigzag construction indeed ensures loop-freedom. To see this, note that all incoming edges from the~$b_{1}$ 
branch will always connect to the~$b_{2-t}$ part of~$b_2$. From here the way back to~$b_1$~(or potentially any 
other branch that connects with~$b_{2-t}$) can only be completed if any of the backward edges from~$b_{2-t}$ to 
$b_{2-b}$ has 
been updated. This cannot be true
for the strong loop freedom definition, since no backward edge can ever be 
updated and the edge is backward in the first and the second round. For relaxed loop freedom it also cannot be updated 
in the first round since it would create a loop on the~$s-d$ path, which is a line of all nodes in the first round. In 
the second round it will not be included since we make 
sure that a maximum update always includes the WEs which will be incoming at the very left side of 
$b_{2-t}$, and hence cannot be updated in the same round with any backward edge on this branch. 

In order to ensure that all the WEs 
will always be included, we will create~$m\cdot~(m+1)$ WEs to every \emph{in} branch. This is always more than the 
amount of backward edges on a 
single branch~$b_2$ since they are only created as a path completion for the SEs. We will have at most~$(m-1) \cdot 
(m+1)$ SEs incoming in a case where this node is connected to every other node~(but itself). Choosing the WEs will 
immediately force that none of the backward edges from~$b_{2-t}$ to~$b_{2-b}$ will be included, as they might cause a 
cycle on a path that might be in-between~$s$ and~$d$.

The~$m \cdot~(m+1)$ WEs to a branch~$b$ are simple
to create. Here, we do not need to take care about other branches 
reached from the \emph{relaxed} branch. Hence we can create the way back to the \emph{relaxed} branch without the 
detour over the~$b_{t}$ part. This is because the WEs will always be the incoming edges on the leftmost part of~$b_t$ 
without the possibility of any other parallel edges making use of them.
\begin{figure} [ht]
    \begin{center}
      \includegraphics[width=0.65\columnwidth]{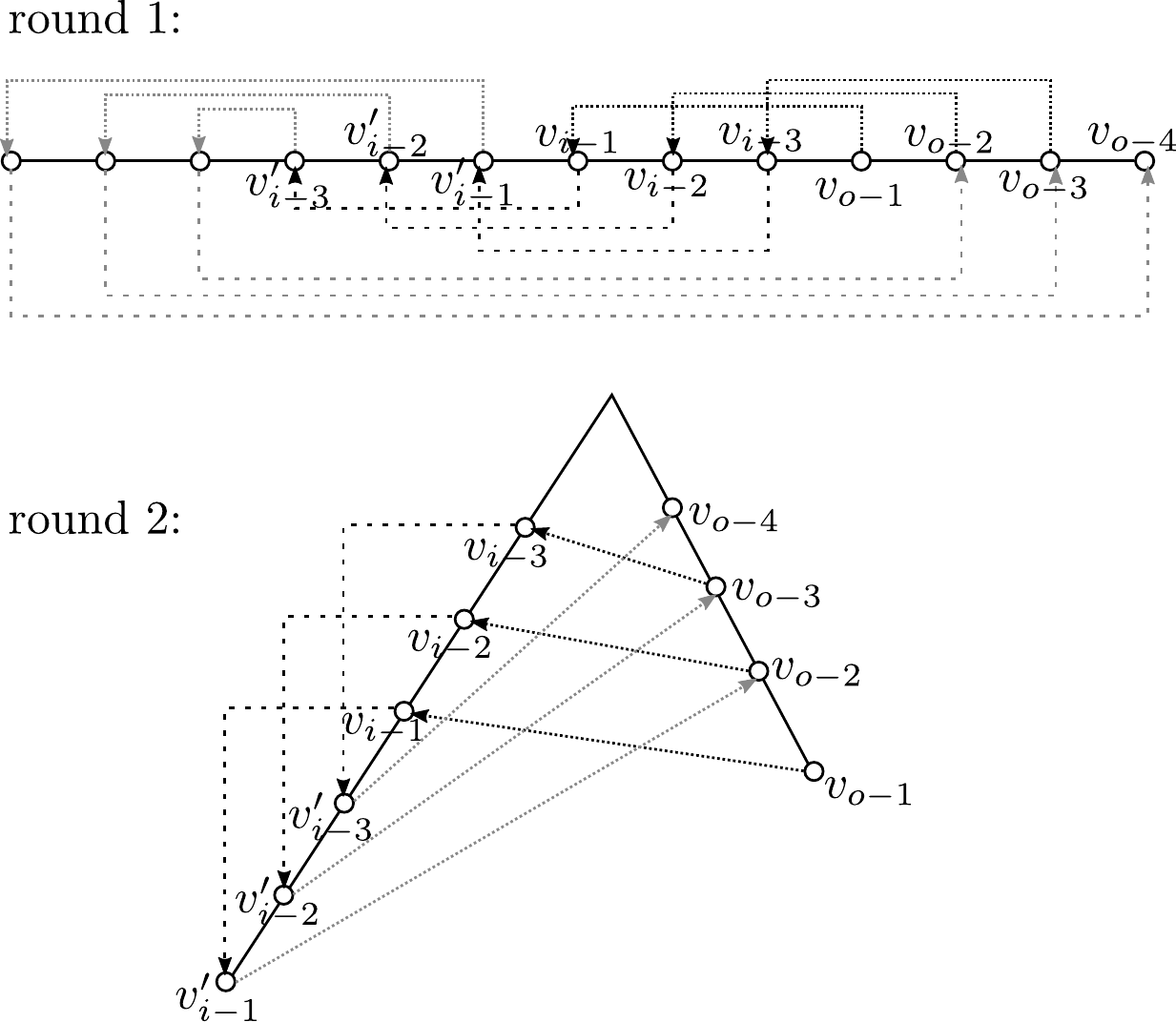}\\
      \caption{Connecting two branches with 3 edges. The backward edges shown in loosely dashed black assure that there will not be a 
way back in the second round from the \emph{in} branch to the \emph{out} branch.}
      \label{fig:nConnection-pic}
    \end{center}
  \end{figure}
\hfill\qed

 \subsection{Connecting the Pieces}
 
Given these gadgets, we are 
able to complete the construction of our problem instance.

\textbf{Realizing the Delayer:} The first created 
segment, \emph{temp}, serves
for edges that are created using 
the \emph{delayer} concept. This is due to 
our construction: every node that will be created in this interval
in our construction will be a forward node and therefore updated in the first 
greedy round. The \emph{temp} segment will be located right 
after the source~$s$ on the line. 

\textbf{Realizing the Branches: } We create two segments for 
each~$\el \in \El$, one~$out$ and one~$in$, and 
sort them
 in descending global order~(and depict them from left to right) 
 w.r.t.~$\el \in \El$, with the~$out$ segment closer to~$s$
 than the~$in$ segment for each~$\el$, i.e. 
$\el_{m}^{out}, \el_{m}^{in}, \ldots, \el_{2}^{out},$~$\el_{2}^{in},\el_{1}^{out}, \el_{1}^{in}$.

 \begin{figure} [ht]
    \begin{center}
      \includegraphics[width=0.75\columnwidth]{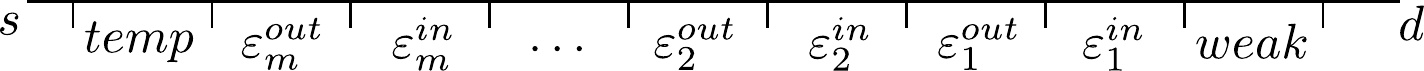}\\
      \caption{Illustration of how to split 
      the old line into segments according to the amount 
      of needed branches in the second round.}
      \label{fig:splitting-pic}
    \end{center}
  \end{figure}
 
\textbf{Connecting the Path:} We will now create the new path from the source~$s$ to the
destination~$d$ through all the different segments. 
This path requires additional edges. We will ensure that
these edges can always be updated and hence 
do not violate the selector properties. Moreover,
we ensure that they do not introduce a loop.
In
order to create a branch with~$s$ at the bottom 
(to ensure that the proof will also hold for relaxed loop-freedom), 
we start 
our path from the source~$s$ to a node~$relaxed-bot$ on the very left part of 
the \emph{relaxed} segment. From here we need to create the~$m \cdot~(m+1)$ connections to every other 
$\el_{i}^{in}$ 
branch, more precisely to the very left of the top part of this branch~$\el_{i-t}^{in}$: the relaxed 
Edges~(WEs). Starting from~$relaxed-bot$, we create the~$m \cdot~(m+1)$ zigzag edges we
postulated earlier~(see Section~\ref{sec:Concepts}) to the~$\el_{1}^{in}$ 
segment. Once this is done, we repeat this process for the remaining 
$\el_{i}^{in}$ connecting them in the same order 
blockwise, as they are ordered on the line. See Figure~\ref{fig:relaxedBranch-pic}.
 
 \begin{figure} [ht]
    \begin{center}
      \includegraphics[width=0.65\columnwidth]{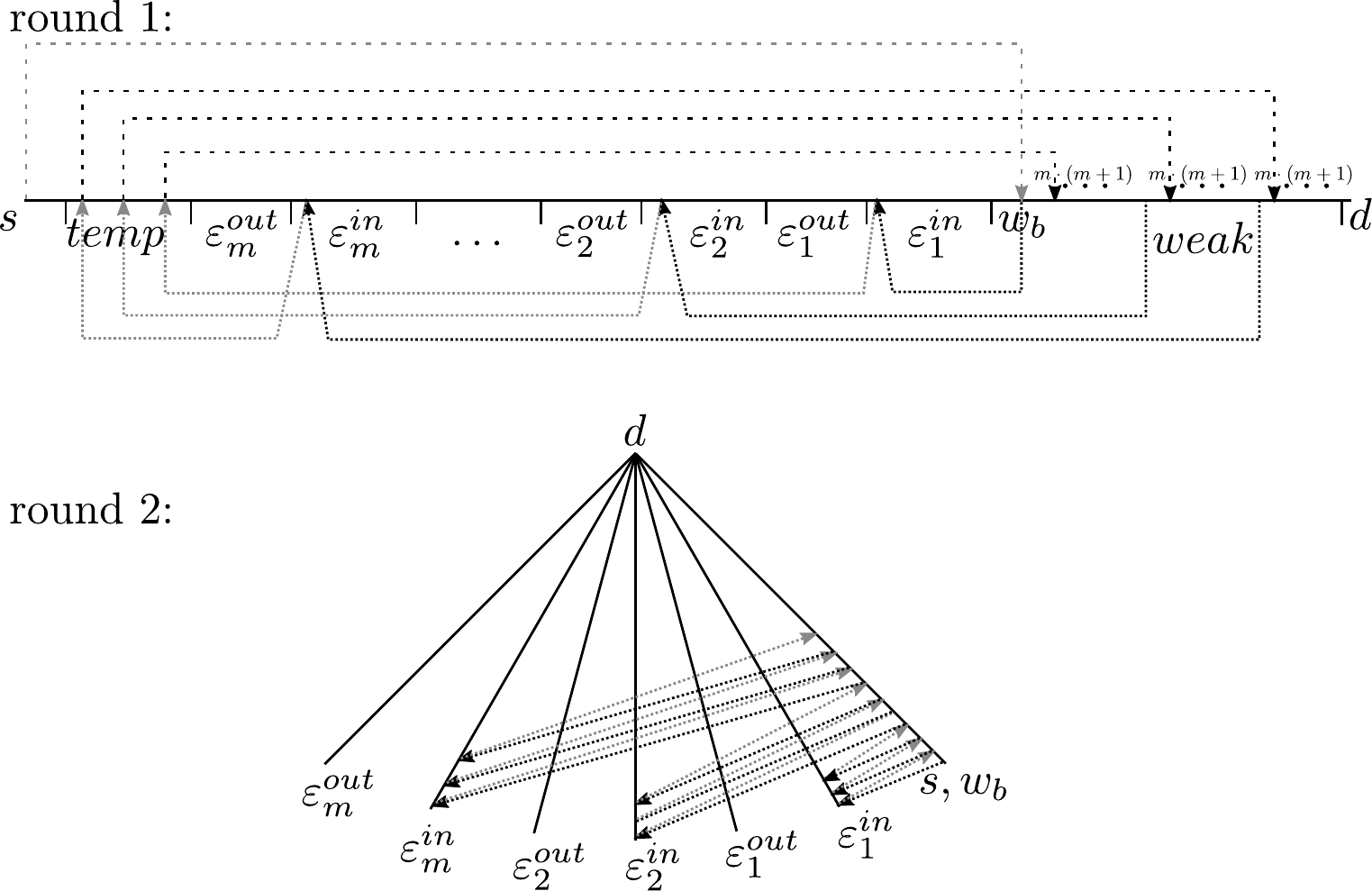}\\
      \caption{Creating the branch with the source at the bottom 
      and~$m \cdot~(m+1)$ connections to each~$\el_{i}^{in}$ segment of the 
line, as shown in Section~\ref{sec:Concepts}. The~$m \cdot~(m+1)$ connections are visualized as a single edge in the 
first round to enhance visibility.}
      \label{fig:relaxedBranch-pic}
    \end{center}
  \end{figure}
 
At the beginning of the second round, 
we will now have a branch with the source~$s$ at the bottom and 
$m+1$ edges to each of 
the~$\el_{i}^{in}$ branches. The next step is to connect the 
out branches with the in branches~(the Set Edges). 
For each set~$S_j \in S$ 
and each pair~$\el_i, \el_l \in S_j$ with no~$\el'\in S_j, \el_i < \el' < \el_l$, 
we create~$m+1$ edges from~$\el_{i}^{out}$ to 
$\el_{l}^{in}$, more precisely to the top part~$\el_{l-t}^{in}$ somewhere above the WEs. Each pair 
$\el_i, \el_l$ only needs to 
connect once with the~$m+1$ 
edges, even if it occurs in several 
different sets of~$S$. The last element~$\el_i$ of a set~$S_j$ will additionally need to be connected to the 
first 
element of the set~(the modulo edges).
 
After the~$m+1$ connections to~$\el_{m}^{in}$, the path returns at 
the right most~(or highest in the~$(s,d)$-branch)
node 
in the \emph{relaxed} segment. From here we 
create a backward edge to the left part of~$\el_{1}^{out}$. Here, we create~$m+1$ connections to every 
$\el_{i}^{in}$, which is 
the next larger element in any of the sets. An example is shown in Figure~\ref{fig:connectBranches-pic}.
 \begin{figure} [ht]
    \begin{center}
      \includegraphics[width=0.65\columnwidth]{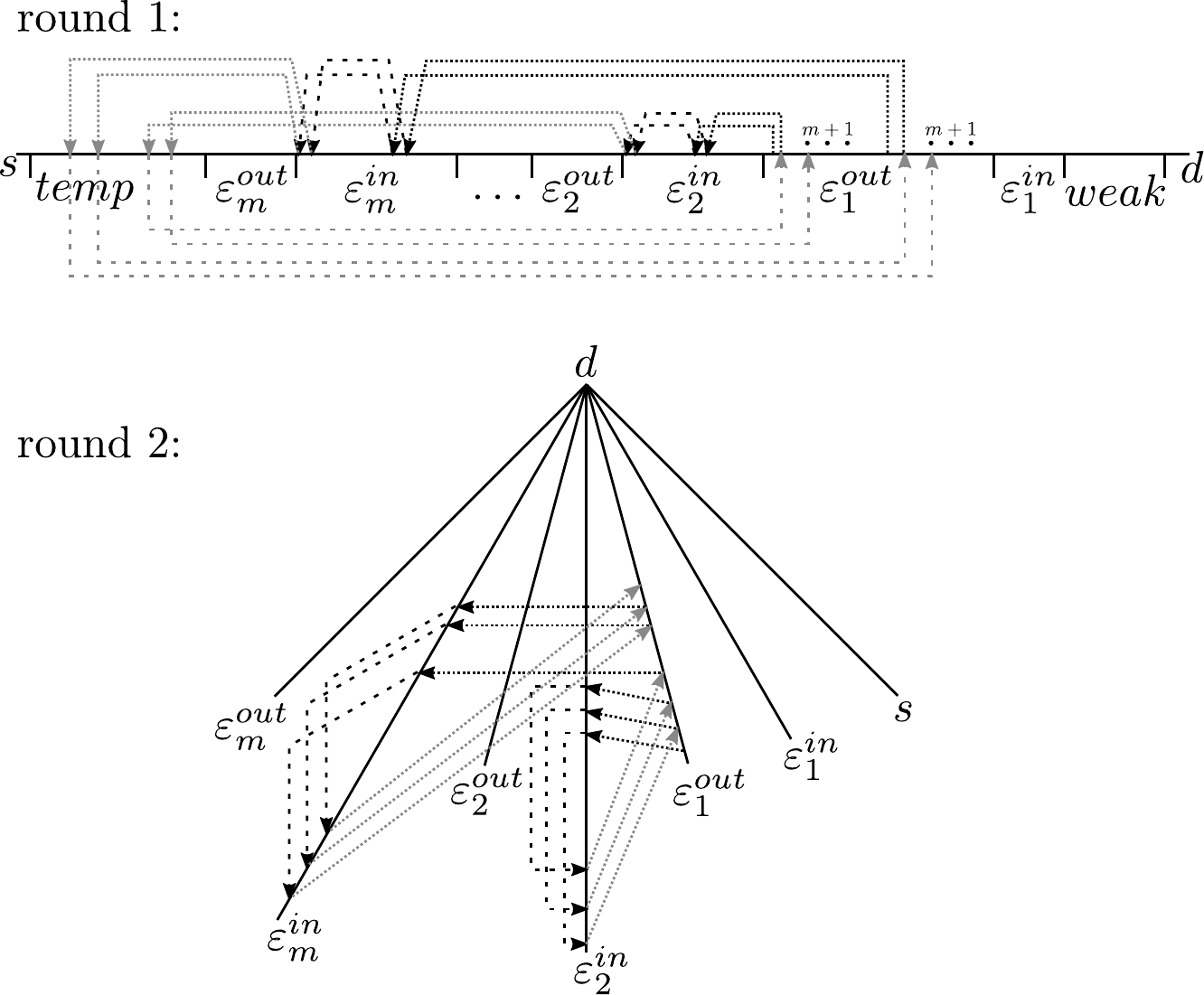}\\
      \caption{Connecting the~$\el_{1}^{out}$ branch with the branches~$\el_{2}^{in}$, 
$\el_{3}^{in}$,~$\el_{m}^{in}$. This scenario 
would be created for the sets:~$\{1,2,\ldots\}$,$\{1,3,\ldots\}$,~$\{1,m\}$. The densely dashed black edges show the outgoing 
edges from~$\el_{1}^{out}$. The loosely dashed black edges are the backward edges from the top part of a branch 
$\el_{i}^{in}$ to its bottom part~($\el_{i-t}^{in}$ to~$\el_{i-b}^{in}$). The densely dashed grey edges are the way back from 
$\el_{i}^{in}$ to~$\el_{1}^{out}$ and are needed to complete the path.}
      \label{fig:connectBranches-pic}
    \end{center}
  \end{figure}
 
To complete the~$m+1$ connections for every pair, we proceed as follows:
we connect the~$\el_{1}^{out}$ branch to all required in-branches, then add the edge from 
$\el_{1}^{out}$
to the~$\el_{2}^{out}$ branch, then add the edges from the~$\el_{2}^{out}$ branch 
to all required in-branches, etc. Generally, we interleave adding the 
edges from the~$\el_{i}^{out}$
branch to all required in-branches and then add the~$i$-out to~$(i+1)$-out
edge.
Until the path 
arrives at the end of the last out branch,~$\el_{m}^{out}$:
 \begin{itemize}
  \item \emph{Step A - Create  the ~$m+1$ set specific edges:} Here we create 
 ~$m+1$ connections to every successor in the 
  respective sets~(at most once per pair). If 
this element is the largest element in a set, it needs to be connected to the in part of the smallest element of this 
set again. Here the delayer concept needs to be used for the modulo edges.

  \item \emph{Step B - Connecting the out branches:} In order to create the next~$m+1$ connections 
  from the next out segment~$\el_{i+1}^{out}$, we need to connect it from our current out segment 
$\el_{i}^{out}$. The edge 
therefore needs to 
point to the rightmost part of~$\el_{i+1}^{out}$. Since this edge is always a 
backward edge in the first round~(we start closer to the destination and move backward towards the source), 
it will turn out to be an edge which points to the very 
top of~$\el_{i+1}^{out}$ at the beginning of the second round. This assures that there are no loops created, 
since the 
only way is going directly towards the destination. 
From here we create an edge pointing to the 
very left side of~$\el_{i+1}^{out}$~(evolving to a backward rule from top to bottom of the branch in the second 
round, 
hence not being part of the update set in the first nor the second round).

 \end{itemize}
 
  \begin{figure} [ht]
    \begin{center}
      \includegraphics[width=0.65\columnwidth]{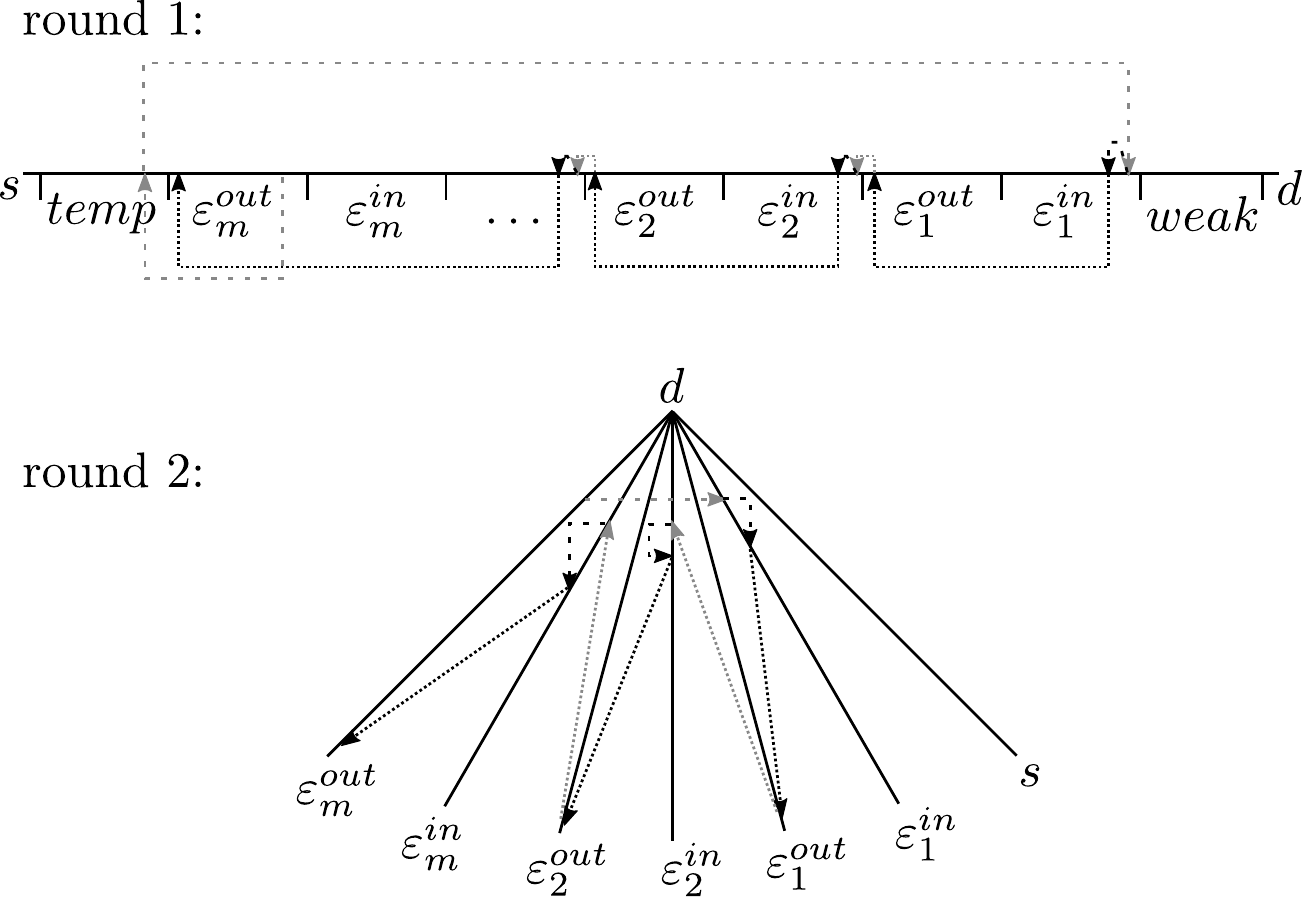}\\
      \caption{Connecting the in and out branches of every~$\el_i$, 
      shown in densely dashed black. The edges shown in desnely dashed grey are needed to 
keep the path complete and the backward edges in loosely dashed black are needed to ensure that only the destination can be reached 
from that point in the second round.}
      \label{fig:hittingEdges-pic}
    \end{center}
  \end{figure}
 
To finish the construction, we need to add the anti-selector edges~(AEs),
and connect the in and out branches 
of every single~$\el_i$ with each other.
The goal is to create, for each given~$i$, an edge from the top of each~$\el_{i}^{in}$ 
to the bottom of each~$\el_{i}^{out}$. This way, if this 
edge is included in the update, a loop may be formed: 
as every incoming edge to~$\el_{i}^{in}$ 
arrives below the AEs start 
point and every outgoing edge on~$\el_{i}^{out}$ is above AE's destination. 
The decision to not include one of these 
edges is equivalent to~$\el_i \in \mathcal{E}'$ in the minimum hitting set problem. 
In order to keep the path connected we will 
also need to include edges from~$\el_{i}^{out}$ to~$\el_{i+1}^{in}$, compare 
Figure~\ref{fig:hittingEdges-pic}. These edges 
will point to the top of~$\el_{i+1}^{in}$ and therefore do not create loops, since the only way is going 
directly to the 
destination. From here we create another backward edge to 
its left neighbor such that there is no possible other way than traversing towards~$d$ from this point. Without this 
backward edge loops may be created, since it introduces 
connections between branches which are not 
both in a set~$S_i$ of the hitting set problem.
Therefore, an update of one of the additional connector edges will never lead to a loop,
 and the edges can all be 
included in the update set of the round 2.

The construction of these edges is straightforward. From 
the end of the current path which is located on the~$\el_{m}^{out}$ segment, 
we create a delayed edge~(over \emph{temp}) to 
the very right part of the~$\el_{1}^{in}$ segment. From here we construct the path as described with a short 
backward edge 
to its left neighbor and then to the very left part of the 
$\el_{i}^{out}$ segment and again to the very right part of the~$\el_{i+1}^{in}$ segment afterwards, 
until we arrive at the very left part of the~$\el_{m}^{out}$ segment.
 
It remains to create the segments and branches for the second round. From~$\el_{m}^{out}$, 
 we create a backward edge 
to the \emph{temp} part. From here we use the branching concept and connect all horizontal nodes 
in-between the single parts that we created on the line~(see Figure~\ref{fig:finalStep-pic}).
 
 \begin{figure} [ht]
    \begin{center}
      \includegraphics[width=0.65\columnwidth]{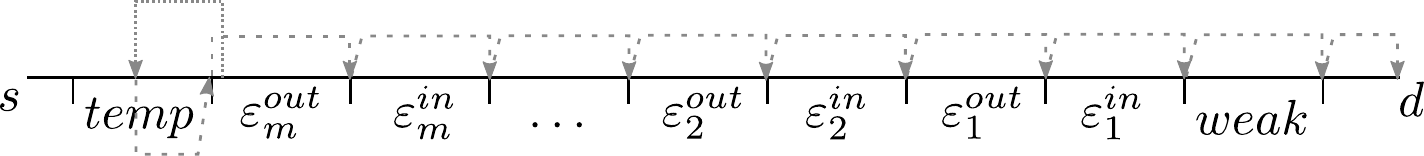}\\
      \caption{Connecting the segments with forward edges. This creates a single branch from the destination for every segment 
due to the merging. The edge shown in loosely dashed grey is connecting this step with the step before.}
      \label{fig:finalStep-pic}
    \end{center}
  \end{figure}

In summary,
 we ensured that already after a single greedy first update round, we end up 
in a situation where choosing the maximum set of updateable nodes is equivalent to choosing the minimum hitting set.

\section{Polynomial-Time Algorithms}\label{sec:algos}

While the computational hardness is disappointing,
we can show that there exist several interesting specialized
and approximative algorithms.

\noindent \textbf{Optimal Algorithms.}
There are settings where an optimal solution can be computed
quickly. For instance, it is easy to see that in the first round,
in a configuration with two paths, updating all forward edges
is optimal: Forward edges never introduce any loop, and at the same
time we know that backward edges can never be updated in the first round,
as any backward edge edge alone~(i.e., taking effect in the first
round), will immediately introduce a loop.
 In the following, we first present an optimal algorithm for~$\SLF$,
for trees with only two leaves.
We will then extend this algorithm to~$\RLF$.

\begin{lemma}
\label{lem:strongLF}
A maximum~$\SLF$ update set can be computed in polynomial-time
in trees with two leaves.
\end{lemma}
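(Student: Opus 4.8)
The plan is to reduce the single-round maximum $\SLF$ update problem to a purely combinatorial acyclicity question on the current forwarding tree $T$, and then exploit the two-leaf structure. First I would make explicit the safety characterization implicit in the model: since the updates of a round take effect asynchronously, a candidate set $U_t$ is $\SLF$-safe iff every temporary forwarding graph $G_t(U,X,E_t)$ (over all $X\subseteq U_t$) is loop-free. Keeping in mind that an old edge remains present as long as some node has not yet acked, this is equivalent to requiring that the graph $H = T \cup \{\out_2(v): v\in U_t\}$, obtained by adding the new edges of $U_t$ to the current tree, be acyclic: any realizable loop uses exactly one outgoing edge per node and is therefore precisely a directed cycle of $H$, and conversely. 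Hence the task becomes: given the acyclic tree $T$ and, for each not-yet-updated node $v$, one candidate edge $\out_2(v)$, select a maximum subset whose addition keeps $H$ acyclic.

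Next I would use the two-leaf assumption: such a $T$ consists of two directed arms $A$ and $B$ merging at a branch node into a single trunk toward $d$ (the trunk may be trivial). Classify each candidate $\out_2(v)$ by the position of its target relative to $v$ in $T$: \emph{forward} (target is an ancestor of $v$, closer to $d$), \emph{backward} (target is a proper descendant of $v$), or \emph{horizontal} (target incomparable to $v$, which on a two-leaf tree means it lies on the opposite arm). Using depth-to-$d$ as a potential, every tree edge and every forward edge strictly decreases depth, so $T$ together with all forward candidates is acyclic and they may all be taken for free. Conversely a single backward edge $(v,w)$ already closes a cycle along $T$ ($v\to w$, then up $T$ back to $v$), so no backward candidate can ever lie in a safe set. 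This confines the optimization to the horizontal candidates.

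The heart of the argument is understanding cycles among horizontal edges, and I expect this to be the main obstacle. I would show that a set of horizontal edges, together with $T$ and the free forward edges, is acyclic iff it contains no conflicting pair, where an $A\!\to\!B$ edge $(a,b)$ and a $B\!\to\!A$ edge $(b',a')$ conflict exactly when $b'$ is an ancestor of $b$ on $B$ and $a$ is an ancestor of $a'$ on $A$ (so that $a\to b$, up $B$ to $b'$, $b'\to a'$, up $A$ to $a$ is a $2$-cycle). Encoding each horizontal edge by the depths of its two endpoints, this is a $2$-dimensional dominance condition, and it is the only source of cycles: since tree and forward edges reach only ancestors, any cycle must alternate arms through horizontal edges, and an interleaving argument on the endpoint depths forces two of them into the pairwise conflict above. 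Concretely, a putative $4$-cycle yields a chain of depth inequalities of the form $\alpha_1 > \lambda_1 \ge \alpha_2 > \lambda_2 \ge \alpha_1$, a contradiction; the general alternating cycle collapses by the same telescoping contradiction on its cyclic sequence of arm-depths.

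Finally, two horizontal edges of the same orientation can never close a cycle, so the conflict relation only links $A\!\to\!B$ edges to $B\!\to\!A$ edges and the conflict graph is bipartite. A maximum set of horizontal edges with no conflict is therefore a maximum independent set in a bipartite graph, computable in polynomial time via K\"onig's theorem as (number of horizontal candidates) minus (a maximum matching). The optimal $\SLF$ update set is then all forward candidates together with this maximum independent set of horizontal candidates; since the number of forward candidates is fixed, this maximizes the total. Decomposing $T$ into arms and trunk, classifying edges, building the $O(n^2)$ conflict graph, and computing the matching all run in polynomial time, which proves the lemma.
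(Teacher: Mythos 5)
Your proof is correct and follows essentially the same route as the paper's: forward edges are always safe, backward edges never are, and the problem reduces to choosing a maximum set of horizontal (inter-branch) edges avoiding pairwise conflict cycles, which is a maximum independent set in a bipartite conflict graph computed via K\"onig's theorem / maximum matching. The only difference is presentational: the paper defines conflicts as crossings in a planar straight-line drawing and merely asserts that every simple loop contains exactly two horizontal edges, whereas you derive the same facts from an ancestor-dominance condition and a telescoping depth-inequality argument, making your write-up somewhat more self-contained.
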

\begin{proof}
Recall that there are three types of new edges in the graph~(see also Figure~\ref{fig:edges}):
forward edges~($F$), backward edges~($B$) and horizontal edges~($H$), 
hence~$E=H\cup B \cup F$. Moreover, recall that 
forward edges can always be updated while backward
edges can never be updated in~$\SLF$. 
Thus, the problem boils down to selecting a maximum subset
of~$H$, pointing from one
branch to the other. If there is a simple loop
$C \in G$ such that~$H^C = E(C)\cap H \neq \emptyset$,
then~$|H^C|=2$ and we say that the
two edges~$e_1,e_2 \in H^C$ cross each other, 
written
$e_1 \times e_2$. 

We observe that the different edge types can
be computed efficiently. For illustration, suppose the policy graph~$G=(V,E)$~(the union 
of old and new policy edges)
is given as a
straight line
drawing~$\Pi$ in the 2-dimensional Euclidean plane, 
such that the old edges of the 2-branch tree
form two disjoint segments which meet at the root 
of the tree~(the destination), and such that each node
is mapped to a unique location.
Given the graph, such a drawing~(including crossings)
in the plane can be computed
efficiently.
Also note that there could be other edges which intersect w.r.t.~the drawing~$\Pi$,
 but those are not important for us.
 
Now create an auxiliary graph~$G'=(V',E')$ where~$V'=\{v_e\mid e\in H\}$, 
$E'=\{(v_{e_1},v_{e_2})\mid e_1,e_2 \in H : e_1 \times e_2\}$. The
graph~$G'$ is bipartite, and therefore finding a minimum vertex cover
$VC \in V(G)$ is equivalent to 
finding maximum matching, which can be done
in polynomial time. Let~$H' = \{e\mid e \in H : v_e\in VC\}$, then the set~$H'$ is a minimum
size subset of~$H$ which is not updatable. Therefore the set~$H\setminus H'$ is the
maximum size subset of~$H$ which we can update in a~$\SLF$ manner.

We conclude the proof by observing that all these algorithmic steps can be computed
in polynomial time.
\hfill\qed
    \begin{figure} [ht]
    \begin{center}
      \includegraphics[width=0.4\columnwidth]{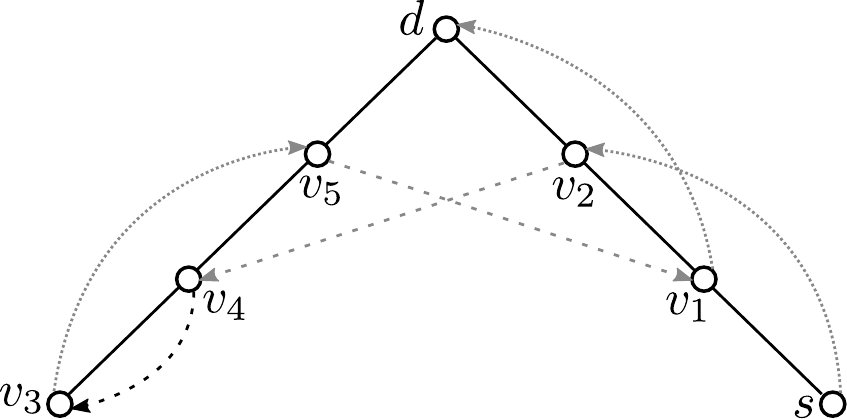}\\
      \caption{Concept of horizontal edges shown in loosely dashed grey. Both horizontal edges~$(v_2, v_4)$ and~$(v_5, v_1)$ are 
crossing each other. The backward edge~$(v_4, v_3)$ is shown in loosely dashed black and the forward edges in densely dashed grey. Note that $s$ does not necessarily have to be
a leaf.}
      \label{fig:edges}
    \end{center}
  \end{figure}
\end{proof}

\begin{lemma}
\label{lemma:weakLF}
A maximum~$\RLF$ update set can be computed in polynomial-time
in trees with two leaves.
\end{lemma}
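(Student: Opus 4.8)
The plan is to reduce to the \SLF\ algorithm of Lemma~\ref{lem:strongLF} and to exploit the only genuine difference between the two models: under \RLF\ we may ignore every loop that is not reachable from the source~$s$. The starting point is the observation that in any intermediate forwarding graph every node has out-degree exactly one, so the walk a packet takes from~$s$ is unique and either ends at~$d$ or runs into a single cycle. Hence an update set is \RLF-safe precisely if, for every asynchronously realized subset~$X$ of the updated nodes, this walk reaches~$d$. As in the \SLF\ case, forward edges can always be added. Moreover, a crossing pair of horizontal edges still forms a forbidden loop: since~$s$ sits at the bottom of its branch, it lies below the lowest node of any such loop and can always reach it along the old tree edges. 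Thus all horizontal crossing constraints of Lemma~\ref{lem:strongLF} survive unchanged.

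The new degree of freedom lies on the branch that does \emph{not} contain~$s$. A backward edge there induces a loop contained entirely in that branch, which a packet from~$s$ can enter only through a horizontal edge that crosses into the branch at or below the top of the loop. I would make this precise through the following conflict rule: a backward edge~$(v,w)$ on the non-$s$ branch is compatible with a horizontal edge entering that branch at node~$u$ unless~$u$ lies at or below~$v$, in which case the worst-case subset~$X$ that activates both edges together with the intervening tree edges traps the packet in the loop. Backward edges on the~$s$-branch remain forbidden, exactly as under \SLF, because~$s$ reaches them directly.

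Given this characterization, I would extend the bipartite auxiliary graph~$G'$ of Lemma~\ref{lem:strongLF}: keep a vertex for every horizontal edge, add a vertex for every candidate backward edge on the non-$s$ branch, and connect two vertices whenever the corresponding updates conflict (horizontal crossings as before, plus the horizontal-versus-backward conflicts above). The maximum \RLF\ update set then consists of all forward edges together with a maximum conflict-free subset of~$H\cup B$, i.e.\ the complement of a minimum vertex cover of~$G'$. The crucial remaining step is to verify that~$G'$ stays bipartite, so that a minimum vertex cover equals a maximum matching by K\"onig's theorem and is computable in polynomial time; the natural two-coloring places the horizontal edges pointing into the non-$s$ branch together with the backward edges on one side and the horizontal edges leaving it on the other. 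The main obstacle is precisely this bipartiteness argument together with a careful case analysis showing that (i) forbidding each pairwise conflict for its single worst-case subset~$X$ already rules out \emph{every} reachable loop for \emph{all}~$X$, and (ii) composite loops that alternate between horizontal and backward edges are already excluded by the pairwise constraints. Once these are established, all steps run in polynomial time, which proves the lemma.
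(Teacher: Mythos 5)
Your approach is in essence the paper's approach: the paper also reduces \RLF{} to the \SLF{} machinery of Lemma~\ref{lem:strongLF}, but it packages your horizontal-versus-backward conflict rule as a graph transformation rather than as extra conflict edges. Concretely, for each backward edge $e$ on the non-$s$ branch the paper inserts a dummy vertex $v^{e}$ into the $s$-branch immediately after $s$ (so below every other node of that branch) and replaces $e$ by the horizontal edge $(tail(e),v^{e})$; since $v^{e}$ lies below every tail of an edge leaving the $s$-branch, this dummy edge crosses exactly those horizontal edges that enter the non-$s$ branch at or below $tail(e)$ --- which is precisely your conflict rule. Your rule itself is correct, as are your observations that forward edges and the crossing constraints carry over and that backward edges touching the $s$-to-$d$ path stay forbidden; the completeness checks (i) and (ii) that you defer are left at a comparable level of detail in the paper, which simply asserts a one-to-one correspondence between loops of $G$ reachable from $s$ and loops of the transformed graph.

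There is, however, one concrete error at the step you yourself single out as crucial: your two-coloring is not proper. You place the horizontal edges pointing \emph{into} the non-$s$ branch on the same side as the backward edges, but your own conflict rule creates conflict edges exactly between such into-edges and backward edges, so under your coloring these conflicts are monochromatic. The correct bipartition is the opposite one: into-edges on one side, and the out-edges \emph{together with} the backward edges on the other. Every conflict then joins the two sides, because (a) crossings only pair an into-edge with an out-edge (two same-direction horizontal edges cannot close a cycle), (b) into-vs-backward conflicts join the two sides by construction, and (c) no two elements of the second side ever conflict pairwise, since any loop in the non-$s$ branch needs an updated into-edge to become reachable from $s$. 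This fix is also implicit in the paper's reduction, where a backward edge is simulated by an edge directed \emph{out of} the non-$s$ branch. With the corrected coloring, K\"onig's theorem applies and the rest of your argument goes through.
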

\begin{proof}
  We prove the lemma by presenting a polynomial-time
reduction to the strong loop-free
case. Let us fix the path~(i.e., branch) in the tree 
consisting of the currently active edges which includes
both the source 
and the
destination:~$P^s_d=(s=v_0,\ldots,v_n=d)$.
Note that in the branch which
contains~$s,d$ there may exist some vertices which have a path to
$s$: those vertices are irrelevant for our construction and we just
consider the path~$P^s_d~$ of the old policy starting at~$s$.

Let us refer to the entire path in the other branch by
$P_2=(u_1,\ldots,u_m)$, omitting the vertex~$d$. 
Here, node 
$u_1$ is the node with the lowest~$y$-coordinate in
the drawing~$\Pi$ (for definition of $\Pi$ see the proof of Lemma~\ref{lem:strongLF}). In this case, we can update~$B$ edges
as long as they are not in any path from~$s$ to~$d$. Therefore, the objective
is to find the maximum subset~$S \subseteq H\cup B$ which is not
part of any loop reachable from~$s$. 

Without loss of generality,
we can
assume that there is no~$B$ edge which connects two vertices of the
path~$P^d_s$: we cannot update those edges anyway, and hence 
we can ignore 
them. If we simulate~$B$ edges with~$H$ edges, then 
the problem becomes equivalent to~$\SLF$ which is in~$P$. 
To see this,  suppose
$B=\{e_1,\ldots,e_k\}$, create a
new graph~$G'$ out of~$G$ by adding~$k$ vertices
$\{v^{e_1},\ldots,v^{e_k}\}$ to~$P^s_d$ to
obtain~$P^k_{s,d}=(s=v_1,v^{e_1},\ldots,v^{e_k},v_1,\ldots,v_n=d)$, and
  a set of edges~$H'=\{(u,v^{e_i})\mid e_i \in B, u = tail(e_i)\}$,
  where the tail of an edge~$e=(u,v)$ is~$u$. After that, 
  we delete all
  edges in~$B$. 
  We can now find the maximum set of the horizontal edges in
$G'$ which can be updated using the same 
algorithm as we had for~$\SLF$. If any edge~$H^e\in H'$ has
been chosen in the algorithm for~$\SLF$ in the~$G'$, we choose
$e\in E(G)$ for the update as well. These edges
together with all forward edges and the chosen edges from the set~$H$
in~$G'$ give us the maximum set of edges~$\mathcal H\in E(G)$ which
can safely be updated in the~$\RLF$ model in~$G$. 
Let~$\bar{\mathcal H} :=E(G) - \mathcal H$. 

Notice that there is no loop
reachable from~$s$ which uses only edges in~$H \cup F$ in
$G_{opt}=(V(G),\mathcal H)$, by the
construction of~$G'$. Moreover, there is no loop in~$G_{opt}$ which uses
edges in~$B$ and which is reachable from~$s$. To see the correctness of the
second claim, suppose an edge~$e=(u,v)\in B$ is chosen such that there is a
path~$P$ which goes through an edge~$e'\in H$ and connects~$s$ to
$u$. Then, in~$G'$ there was an edge~$e'' =(u,v^{e'})$. But at
least one of the edges~$e'$ and~$e''$ has been eliminated for 
the update in~$G'$
then, by the construction of the algorithm, either there is no edge 
like~$e$, or there is no path like~$P$ which goes through~$e'$. 

We
proved that the solution is valid. For the optimality,
we just note that there is a one-to-one relationship between simple loops of~$G$
which are reachable from~$s$, and loops of~$G'$. This means that if we make
$G'$ loop-free, we transfer~$G$ to the graph which has no loop
reachable from~$s$. So any optimal solution for~$G'$ is an optimal
solution for~$G$.
\end{proof} 
  
\noindent \textbf{Approximation Algorithms.}
Even in scenarios for which there is no optimal polynomial time 
scheduling algorithm, there can exist good approximations.
It is easy to observe that there is a reduction to the Maximum Acyclic Subgraph
Problem~(\emph{MASP}) which 
ensures that both $\RLF$ and $\SLF$ can be approximated at least as well as MASP. 
It is also easy to see that the problem for strong loop-freedom~(for~$\SLF$) is
$1/2$-approximable in general, as the problem boils down to 
finding a maximum subset of
$H$ edges which are safe to update, and 
at
least half of the~$H$ edges are pointing out to the left resp.~right,
and we can take the majority. 
Similarly for \RLF: let $F$ be the set of vertices where every
$v\in F$ appears along a walk between source and
destination. Similar to \SLF, at least half of the edges of $F$ are safe to
update, and we can find these edges quickly. Also every $e=(u,v)\in E(G)$, where
$u\not\in F$ or $v\not\in F$, is safe
to update. So we have at least a $1/2$-approximation.

However, for a small number of leaves, even 
better approximations are possible.
The following lemma can be proven by 
an approximation preserving reduction to the hitting set problem.
\begin{lemma}\label{lem:apx}
The optimal~$\SLF$ schedule 
is
$2/3$-approximable in polynomial time
in scenarios with exactly three leaves.  
For scenarios with four leaves, there exists a polynomial-time
$7/12$-approximation algorithm. 
\end{lemma}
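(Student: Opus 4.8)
The plan is to build on Lemma~\ref{lem:strongLF}: in any $\SLF$ instance the forward edges $F$ are always updatable and the backward edges $B$ never are, so the task reduces to selecting a maximum loop-free subset of the horizontal edges $H$. I would view the complementary problem -- deleting a minimum set of horizontal edges so that no loop survives -- as a hitting set instance whose ground set is $H$ and whose sets are the simple loops of the current graph. The first and most important step is a structural lemma about these loops: after contracting the tree edges, I would show that every \emph{minimal} loop visits each branch at most once, so that it corresponds to a simple directed cycle in the ``branch graph'' whose vertices are the branches and whose edges record which ordered pairs of branches are joined by a crossing horizontal edge. In a scenario with $c$ leaves this branch graph has at most $c$ vertices, hence every minimal loop uses at most $c$ horizontal edges and the whole family of loops can be enumerated in polynomial time. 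Proving that minimal loops never revisit a branch -- ruling out ``hub'' walks such as $b_1\to b_2\to b_1\to b_3\to b_1$ -- is the part I expect to be the main obstacle, since it is exactly what bounds the set size of the hitting-set instance and makes the branch decomposition below correct.

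Second, I would settle the three-leaf case. With three branches the branch graph lives on three vertices, so the only multi-branch loops are triangles, and a triangle necessarily uses all three branch-pairs $(b_1,b_2)$, $(b_1,b_3)$, $(b_2,b_3)$; every remaining loop is a within-pair $2$-cycle, i.e.\ a crossing of two edges of a single pair. The algorithm then runs three times: in run $i$ it discards every horizontal edge of branch-pair $i$ entirely. No triangle can survive a run, because each triangle meets all three pairs, so the only loops left are the within-pair $2$-cycles of the two kept pairs; these I solve \emph{optimally} with the bipartite matching / vertex-cover subroutine of Lemma~\ref{lem:strongLF}, applied independently to each kept pair. The best of the three runs is returned. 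Since each horizontal edge lies in exactly one pair and the restriction of a global optimum to any two pairs is both loop-free and feasible for the corresponding run, discarding the pair that carries the fewest optimum edges keeps at least $2/3$ of the optimum, which yields the claimed $2/3$-approximation.

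Third, for four leaves the branch graph is a $K_4$, so beside triangles there now appear $4$-cycles using all four branches, and one can check that destroying every cycle by keeping only a spanning tree of three pairs guarantees merely a $1/2$ fraction. Instead I would keep more branch-pairs and resolve the residual triangle and $4$-cycle conflicts by combining the three-leaf and two-leaf subroutines on the induced sub-instances and averaging over the symmetric choices of which pairs to isolate; a careful accounting of how an optimum distributes over the six pairs and four triples, together with the extra loss charged to the $4$-cycles, is what drives the ratio from $2/3=8/12$ down to $7/12$. I expect this counting, and the verification that the recombined edge set is globally loop-free (again using the no-revisited-branch lemma), to be the delicate point of the four-leaf analysis. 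Finally, all steps run in polynomial time: the loop family is of polynomial size by the structural lemma, the matching subroutine is polynomial, and for a constant number of leaves only a constant number of sub-instances are solved.
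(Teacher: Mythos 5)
For the three-leaf claim your argument is correct, but it takes a genuinely different route from the paper's. The paper sets up the same deletion view (ground set $H$, one set per loop, each set of size at most $3$), enumerates the polynomially many loops, and then invokes the $d$-hitting-set approximation of~\cite{Duh:1997:AKS:258533.258599} with $d=3$ to obtain a $4/3$-approximate deletion set $H'$, converting this into a $2/3$-approximation for the maximization problem via $|H'|\le |H^{opt}|$ (which holds because the majority-direction argument gives $|H^{opt}|\ge |H|/2$). You instead exploit the branch structure directly: three runs, each discarding one branch-pair and solving the two surviving pairs exactly by the bipartite-matching subroutine of Lemma~\ref{lem:strongLF}, with the averaging argument (the discarded pair can be chosen to carry at most a third of the optimum, whose restriction to the kept pairs is feasible for that run) giving $2/3$. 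Your route buys robustness: it needs no black-box hitting-set approximation, which matters because the cited bound $\sum_{i=1}^d 1/i-1/2$ is for $d$-set cover (the sets to be \emph{chosen} have size at most $d$), whereas the deletion problem here is vertex cover in a $3$-uniform hypergraph, for which the generic guarantee is only a factor of $d$; your combinatorial algorithm sidesteps this entirely. Also, the structural lemma you flag as the main obstacle is easier than you fear, and is exactly what the paper asserts when it writes $|C^H|\le 3$: take a loop with the \emph{fewest} horizontal edges; if it visited a branch twice, on vertex-disjoint segments $[x_1,y_1]$ lying below $[x_2,y_2]$, then replacing its excursion from $y_1$ to $x_2$ by the (always present) tree path from $y_1$ up to $x_2$ yields a loop with strictly fewer horizontal edges, a contradiction. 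Phrase it for minimum-$H$ loops rather than all simple loops: a simple loop \emph{can} revisit a branch, but whenever such a loop is present a cheaper one is present too, so hitting the minimal ones suffices.

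The four-leaf half of your proposal has a genuine gap: there is no algorithm there yet. Once four or more of the six branch-pairs are kept, the kept pair-graph contains a triangle or a $4$-cycle, so the sub-instances are no longer solved exactly by the matching subroutine, and you cannot plug in your three-leaf routine as if it were exact, since it is itself only a $2/3$-approximation. Moreover, the recombinations you hint at all collapse to $1/2$ under the kind of averaging you use for three leaves: spanning trees of the pair-graph give each pair weight $1/2$, and keeping a triangle plus a pendant pair (exact on the pendant, $2/3$ on the triangle) averages to $(6\cdot\frac{2}{3}+2\cdot 1)/12=1/2$ as well. So the claimed drop from $8/12$ to $7/12$ is not an accounting detail but the entire difficulty. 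Be aware that the paper offers no help here either: it says only that ``a similar argument works, and we omit the proof'', and with $d=4$ its own conversion would give $2-\frac{19}{12}=\frac{5}{12}$, not $\frac{7}{12}$. Establishing (or correcting) the four-leaf constant requires a concrete new idea that neither your sketch nor the paper supplies.
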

\begin{proof}
We use an approximation preserving reduction to the~$d$-hitting set problem
which is~$\Sigma_{i=1}^d 1/i -
1/2$-approximable~\cite{Duh:1997:AKS:258533.258599}, 
and particularly, we
use a~$3$-hitting set which gives us a~$2/3$-approximation algorithm.

Let~$G=(V,E)$ be the update graph with at most three leaves 
and let~$H$ be the set of the horizontal edges. For every
closed simple loop~$C \subseteq G=(V,E)$ we have~$C^H = E(C) \cap H \neq \emptyset$. Furthermore~$C^H \le 3$. Given these
observations, we construct our hitting set as follows. Let~$|H| = m$ and
let~$F$ be a
one-to-one mapping~$F:H \rightarrow [m]$. For each simple
loop~$C_j$ let~$C_i^H$ =~$\{s_1,s_2,s_3\}$, and create a subset
 ~$S_i=\{F(s_1),F(s_2),F(s_3)\}$. Note that if~$|C_i^H| = 2$
  then we have a subset~$S_i$ of size~$2$. There are at most~$|H|^3$
  simple loops, as choosing any set of size at most~three edges from~$E$
  forces at most one simple loop. So we have $m \choose 3$
  loops with~$3$ edges in~$H$ and~$m \choose 2$ loops with two edges
  in~$H$. Furthermore the hitting set for
 ~$S_1,\ldots,S_t$ gives a minimum set of update edges to be removed;
  on the other hand, every subset~$S_i$ is of cardinality at most
 ~$3$. This gives a~$4/3$-approximation on the size of subset
 ~$H'\subseteq H$, which we do not update. On the other hand, in the
  optimal solution~$H^{opt}$ we have~$|H'| \le H$ resp.~$|H^{opt}| \ge |H'|$, so the approximation factor will be at least
 ~$(1-1/3)|opt|$: this is a~$2/3$-approximation, as claimed. For four
  leaves, a similar argument works, and we omit the proof.
\hfill\qed
\end{proof}


 

\section{Related Work}\label{sec:relwork}


In their seminal work, Reitblatt et
al.~\cite{abstractions} initiated the study of
network updates providing strong, per-packet consistency guarantees, and the authors also presented
a 2-phase commit protocol. This protocol also forms the basis of the distributed control plane implementation 
in~\cite{infocom15}.
Mahajan and Wattenhofer~\cite{roger} started investigating
a hierarchy of transient consistency properties---in particular also~(strong) loop-freedom
but for example also bandwidth-aware updates~\cite{Brandt2016On}---for destination-based routing
policies.
The measurement studies in~\cite{dionysus} and~\cite{knowFlow} provide
empirical evidence for the non-negligible time and high variance of switch updates, further motivating their and our work.
In their paper, Mahajan and Wattenhofer proposed an algorithm to ``greedily'' select a maximum number of edges which can be used early 
during
the policy installation process. This study was recently refined 
in~\cite{Forster2016Consistent,Forster2016Power},
a parallel work to ours, where the authors also establish a hardness result
for destination based routing (single- and multi-destination).
Our work builds upon~\cite{roger} and 
complements the results in~\cite{Forster2016Consistent,Forster2016Power}:
We consider  
the scheduling complexity of updating \emph{arbitrary routes}
which are not necessarily destination-based. Interestingly,
our results (using a different reduction) show that
even with the requirement that the initial and the final routes
are simple paths, the problem is NP-hard. 
Moreover, our results hold for both the strong $\SLF$
and the relaxed $\RLF$ loop-free problem
variants introduced in~\cite{podc15} (this distinction
does not exist in~\cite{Forster2016Consistent}). 
The $\SLF$ can be seen as a special
variant of the Dual Feedback Arc Set Problem~(FASP) resp.~Maximum
Acyclic Subgraph Problem~(MASP): important classic problems
in approximation theory~\cite{fasp}. In particular,
it is known that dual-FASP/MASP can be $1/2+\varepsilon$ 
approximated on general graphs~(for arbitrary small $\varepsilon$). 
The results presented in this
paper also imply that better approximation algorithms and even optimal
polynomial-time algorithms exist for special graph families, namely
graph families describing network update problems; this may be of
independent interest.
The~$\RLF$ variant is a new optimization problem,
and to the best of our knowledge, existing bounds are not applicable
to this problem. 
We should note that FASP is in FPT~\cite{Chen:2008:FAD:1411509.1411511}, and
the hitting set problem is W[2]-hard~\cite{DBLP:series/txtcs/FlumG06}. In
our hardness construction we actually find a reduction from hitting
set to FASP for particular graph classes. But the reduction is not
parameter preserving, so the W-hierarchy does not collapse.
Finally, our model is orthogonal to the network update problems
aiming to minimizing the number of interactions with the controller
(the so-called \emph{rounds}), which we have recently studied
for single~\cite{podc15} and multiple~\cite{dsn16} policies, also including additional properties,
beyond loop-freedom, such as waypointing~\cite{sigmetrics16}. The two
objectives conflict~\cite{podc15}, a good approximation for the number of update
edges yields a bad approximation for the number of rounds, and vice versa.

\section{Concluding Remarks: Special Graph Classes}\label{sec:special}

We conclude our contribution with some remarks.
First, it is easy to observe that there is a reduction to the Maximum Acyclic Subgraph
Problem~(\emph{MASP}) which 
ensures that both $\RLF$ and $\SLF$ can be approximated at least as well as MASP. 

It is also interesting to study the hardness of the problem on some special graph
classes. By $\bar{G}$ we denote the underlying undirected graph of a
graph $G$ which is obtained by replacing directed edges with undirected
edges, and deleting parallel edges.
We have the following lemma for bounded $\tw$~\cite{Robertson1986309}
scenarios. 

\begin{lemma}\label{lem:btw}
Given a digraph $G$, both $\RLF$ and $\SLF$ are solvable in polynomial time if
$\bar{G}$ has bounded \tw.
\end{lemma}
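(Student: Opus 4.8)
The plan is to reduce, for a single greedy round, the update problem to a \emph{constrained maximum acyclic subgraph} question, and then solve the latter on bounded-treewidth instances via monadic second-order logic. Fix a round $t$ and recall that the already-installed edges (those of the previously updated nodes together with the old edges of the remaining nodes) form the current forwarding configuration, which is a forest directed towards $d$ and hence \emph{acyclic}; call it the skeleton $T$. For a candidate update set $S\subseteq U_t$, let $G_S$ be the digraph obtained from $T$ by additionally inserting $\out_2(v)$ for every $v\in S$. The first step is to prove the characterization: $S$ is $\SLF$-safe if and only if $G_S$ is acyclic. Indeed, for any $X\subseteq S$ every edge of the temporary forwarding graph $G_t(U,X,E_t)$ is present in $G_S$ (old edges always, new edges only for nodes in $X\subseteq S$), so a loop for some $X$ yields a directed cycle of $G_S$; conversely a directed cycle $C$ of $G_S$ is realised by the schedule $X=\{v\in C: C\text{ uses }\out_2(v)\}$, since along $C$ each node's unique outgoing choice then matches $C$. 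For $\RLF$ the identical argument, restricted to the edges reachable from $s$, shows that $S$ is $\RLF$-safe iff $G_S$ has no directed cycle reachable from $s$ (alternatively one may first apply the $\RLF\to\SLF$ reduction of Lemma~\ref{lemma:weakLF}). In both cases the task becomes: among the deletable edges $\{\out_2(v):v\in U_t\}$, keep a \emph{maximum} number so that $T$ plus the kept edges stays acyclic, resp.\ free of $s$-reachable cycles.

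Next I would set up the logic. Since the optimization ranges over a set of edges, I would use $\mathrm{MSO}_2$ (monadic second-order logic permitting quantification over edge sets), which is tractable on graphs of bounded treewidth. Treewidth is insensitive to edge orientations and to parallel edges, so $\tw(G)=\tw(\bar G)$ is bounded by hypothesis; in particular the many parallel new edges (e.g.\ the $m+1$ copies of each set edge) remain individual elements over which the formula quantifies and which the objective counts individually. Crucially, acyclicity is $\mathrm{MSO}$-expressible: for a chosen edge set $S$, the digraph $T\cup S$ contains a directed cycle iff there is a nonempty vertex set $C$ such that every vertex of $C$ has a kept outgoing edge landing inside $C$ (finiteness then forces a cycle, and conversely any cycle supplies such a $C$). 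For $\RLF$ one conjoins the $\mathrm{MSO}$-definable predicate ``some vertex of $C$ is reachable from $s$ along kept edges'', reachability being expressible through the standard connectivity/transitive-closure formula available in $\mathrm{MSO}$.

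Finally, the maximization ``choose an edge set $S$ of maximum size subject to the above $\mathrm{MSO}$ constraint'' is an instance of the optimization (LinEMSO) extension of Courcelle's theorem. Because $\tw(\bar G)$ is bounded by a constant, this yields an algorithm running in time linear in the size of $G$ (with a constant depending only on the treewidth bound), which is in particular polynomial, and which returns an optimal update set for both $\SLF$ and $\RLF$.

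The main obstacle I expect is not the final invocation of Courcelle's theorem but getting the two reductions exactly right. First, one must prove cleanly that the ``for all asynchronous subsets $X$'' safety condition collapses to plain acyclicity of $G_S$ (the subtle direction being that a directed cycle of $G_S$ is always realisable by a single consistent schedule $X$, since each node on the cycle selects exactly one outgoing edge), and for $\RLF$ that restricting to $s$-reachable edges is compatible with this collapse. Second, one must justify that bounded $\tw(\bar G)$ really does give tractable $\mathrm{MSO}_2$ model checking on the oriented multigraph $G$, using that treewidth ignores orientations and multiplicities. Once these bookkeeping points are settled, the single $\mathrm{MSO}$ formula together with the optimization theorem delivers the result uniformly for both loop-freedom notions.
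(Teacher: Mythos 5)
Your proof is correct, and it rests on the same pillar as the paper's own argument: Courcelle-type model checking on graphs of bounded treewidth. The paper's proof is only three sentences---it invokes Courcelle's theorem to solve feedback arc set for \SLF, and for \RLF{} it first restricts attention to the subgraph induced by the vertices lying on a walk between source and destination and solves the same problem there. Your write-up differs in three ways, each of which buys something. First, you make explicit the characterization that the ``for all asynchronous subsets $X$'' safety condition collapses to acyclicity of the single graph $G_S$ (respectively, absence of $s$-reachable cycles in $G_S$); the paper leaves this entirely implicit, yet it is precisely the step that licenses the translation to a constrained maximum acyclic subgraph problem. Second, you invoke the optimization (LinEMSO) extension of Courcelle's theorem rather than the plain decision version; since the task is to \emph{maximize} the update set, plain MSO model checking does not suffice, so the paper's bare citation of Courcelle is, strictly speaking, too weak, and your formulation repairs this. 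Third---and most substantively---your \RLF{} treatment encodes reachability from $s$ \emph{with respect to the kept edges} inside the MSO formula, whereas the paper fixes the set of ``$s$--$d$-walk'' vertices statically, before the update set is chosen. Your version is the semantically faithful one: whether a cycle is reachable from $s$ depends on which new edges are actually installed, so the paper's static restriction needs an extra argument (that any kept cycle inside the restricted subgraph is necessarily reachable along kept edges, or at least that the two optima coincide on legal update instances), and the paper supplies none. The one blemish in your proposal is the parenthetical suggestion to reuse the $\RLF\rightarrow\SLF$ reduction of Lemma~\ref{lemma:weakLF}: that reduction is proved only for trees with two leaves and cannot be invoked for general bounded-treewidth instances; fortunately your main argument never relies on it.
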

\begin{proof}
Thanks to Courcelle's theorem~\cite{COURCELLE199012},
we can solve the feedback
arc set problem in bounded $\tw$ digraphs in polynomial time. This
directly gives a solution for 
SLF. For \RLF, we find all vertices which are on
a walk between source and destination, and we apply Courcelle theorem
to the subgraph of $G$ induced by these vertices. 
\end{proof}

Unfortunately the undirected width measures are not very useful in
directed graphs. Analogously to $\tw$ which is defined for 
undirected graphs, there
exists a directed $\tw$ notion for directed graphs, introduced by Johnson et
al.~\cite{Johnson2001138}. We refer the reader to the provided
reference for the definition.

An interesting question regards whether 
$\RLF$ and \SLF, and more
generally MASP, are polynomial-time solvable in digraphs of bounded
directed treewidth and bounded degree. There are two negative
results related to this question. First, it has been shown that the
Feedback Arc Set Problem~(FASP) is already 
NP-complete~\cite{Kreutzer20114688} in digraphs of
directed tree width at most~$5$. Their hardness construction
is based on a graph which has a bounded degree in all vertices except
for one vertex. It seems that with binarization one can easily adapt their
proof to show that the FASP problem still remains hard in digraphs of
bounded degree and bounded directed treewidth. But on bounded degree
graphs, vertex cover problems~\cite{Garey:1979:CIG:578533} are NP-complete,
and a simple construction for vertex cover yields an NP-hardness
result for FASP in those graphs as well. 
This suggests that directed tree-width cannot be exploited in our problem. 

However, another kind of directed width measure may be more useful: 
The \emph{directed path width} is defined very similarly
to the directed
treewidth; intutively the graph looks like a ``thick directed path''. None
of the negative results for bounded degree graphs on graphs of bounded
directed tree-width can be extended to digraphs of bounded directed
path-width with bounded degree. 
We claim the following: 
There is a function $f\colon \N \rightarrow \N$ such that for a digraph $G$ of directed path-width $k$ and maximum degree
$d$, there is an algorithm which runs in time and space $n^{f(k+d)}$
and finds an optimal solution to FASP.

\noindent \textbf{Acknowledgments.}
The research of Saeed Amiri has been supported by the
European Research Council (ERC) under the European Union's Horizon
2020 research and innovation programme (ERC consolidator grant DISTRUCT,
agreement No 648527).

{

}

\end{document}